\begin{document}
%
\title{Capacity Performance of Relay Beamformings for MIMO Multi-Relay Networks with Imperfect $\mathcal{R}$-$\mathcal{D}$ CSI at Relays}

\author{Zijian Wang, Wen Chen, Feifei Gao, and Jun Li, \em{Members,
IEEE}
\thanks{Copyright (c) 2011 IEEE. Personal use of this material is permitted. However,
permission to use this material for any other purposes must be obtained from the IEEE
by sending a request to pubs-permissions@ieee.org.}
\thanks{Z. Wang and W. Chen are with Wireless Network Transmission Laboratory, Shanghai
        Jiao Tong University, China. Z. Wang is also with ISN SKL, Xidian University, and
        W. Chen is also with SEU SKL for mobile communications, e-mail: \{wangzijian1786; wenchen\}@sjtu.edu.cn.}
\thanks{F. Gao is with the Department of Automation, Tsinghua University,
Beijing 100084, China and is also with the School of Engineering and
Science, Jacobs University, Bremen, 28759, Germany, Email:
feifeigao@ieee.org.}
\thanks{J. Li is with the School of Electrical
Engineering and Telecommunications, University of New South Wales,
Australian. Email: jun.li@unsw.edu.au.}
\thanks{This work is supported by NSF China \#60972031, by SEU SKL project
\#W200907, by ISN project \#ISN11-01, by Huawei Funding
\#YJCB2009024WL and \#YJCB2008048WL,
and by National 973 project \#2009CB824900.}
}

\markboth{IEEE Transactions on Vehicular Technology.}{} \maketitle

\begin{abstract}
In this paper, we consider a dual-hop Multiple Input Multiple Output
(MIMO) wireless relay network in the presence of imperfect channel
state information  (CSI), in which a source-destination pair both
equipped with multiple antennas communicates through a large number
of half-duplex amplify-and-forward (AF) relay terminals. We
investigate the performance of three linear beamforming schemes when
the CSI of relay-to-destination ($\mathcal{R}$-$\mathcal{D}$) link is not perfect at the
relay nodes. The three efficient linear beamforming schemes are
based on the matched-filter (MF), zero-forcing (ZF) precoding and
regularized zero-forcing (RZF) precoding techniques, which utilize
the CSI of both $\mathcal{S}$-$\mathcal{R}$ channel and $\mathcal{R}$-$\mathcal{D}$ channel at
the relay nodes. By modeling the $\mathcal{R}$-$\mathcal{D}$ CSI error at the relay nodes as
independent complex Gaussian random variables, we derive the ergodic capacities of the three beamformers in terms of instantaneous SNR. Using Law of Large Number, we obtain the asymptotic capacities, upon which the optimized MF-RZF is derived. Simulation results show that the asymptotic capacities match with the respective ergodic capacities very well. Analysis and simulation results demonstrate that the optimized MF-RZF outperforms MF and MF-ZF for any power of $\mathcal{R}$-$\mathcal{D}$ CSI error.
\end{abstract}

\begin{keywords}
MIMO relay, capacity, beamforming, channel state information.
\end{keywords}

\section{Introduction}
Relay communications can extend the coverage of wireless networks
and improve spatial diversity of cooperative systems.
 Meanwhile, MIMO
technique is well verified to provide significant improvement in the
spectral efficiency and link reliability because of the multiplexing
and diversity gains~\cite{1,2}. Combining the relaying and MIMO
techniques can make use of  both advantages to increase the data
rate in the cellular edge and extend the network coverage.

MIMO relay networks have been extensively investigated in
~\cite{3,5,7,31,32,33}. In addition MIMO multi-relay networks have been studied  in~\cite{4,6,8,9}. In~\cite{4}, the authors show that the
corresponding  network capacity scales as $C=(M/2) \log(K)+O(1)$,
where $M$ is the number of antennas at the source and
$K\rightarrow\infty$ is the number of relays. The authors also propose a
simple protocol to achieve the upper bound as $K\rightarrow\infty$ when perfect channel state informations (CSIs) of both source-to-relay ($\mathcal{S}$-$\mathcal{R}$) and relay-to-destination ($\mathcal{R}$-$\mathcal{D}$) channels are available at the relay nodes. When CSIs are not available at the relays, a simple AF beamforming protocol is proposed at the relays, but the distributed array gain is not obtained.
In~\cite{6}, a linear relaying scheme based on minimum mean square error (MMSE) fulfilling the target SNRs on
different substreams is proposed and the power-efficient relaying
strategy is derived in closed form for a MIMO multi-relay network. In~\cite{8,9}, the authors
design three relay beamforming schemes based on matrix
triangularization which have superiority over the conventional
zero-forcing (ZF) and amplify-and-forward (AF) beamformers. The proposed beamforming scheme can both fulfill intranode gain and distributed array gain.

However, most of the works only consider  perfect channel state
information (CSI) to design beamformers at the relays or successive
interference cancelation (SIC) matrices at the destination. For the
multi-relay networks, imperfect CSI of  $\mathcal{R}$-$\mathcal{D}$
channel is a practical consideration~\cite{4}. Especially, knowledge
for the CSI of $\mathcal{R}$-$\mathcal{D}$ channels at relays will
result in large delay and significant training overhead, because the
CSI of $\mathcal{R}$-$\mathcal{D}$ channels at the relays are
obtained through feedback links to multiple relays~\cite{40}.

For the works on imperfect CSI,
the ergodic capacity and BER performance of MIMO with imperfect CSI is considered in~\cite{14,16,19}.
In~\cite{14}, the authors investigated lower and upper bounds
of mutual information under CSI error. In~\cite{16}, the authors
studied BER performance of MIMO system under combined beamforming
and maximal ratio combining (MRC) with imperfect CSI. In~\cite{19},
bit error probability (BEP) is analyzed based on Taylor
approximation.
Some optimization problem has been investigated with
imperfect CSI in~\cite{15,17,18,20,21}. In~\cite{15}, the authors
maximize a lower bound of capacity by optimally configuring the
number of antennas with imperfect CSI. In~\cite{18}, assuming only
imperfect CSI at the relay, optimization problem of maximizing upper
bound of mutual information is presented and solved. In~\cite{21}, the authors studied
the trade-off between accuracy of channel estimation and data
transmission, and show that the optimal number of training symbols
is equal to the number of transmit antennas. In~\cite{34}, the authors investigate the effects of channel estimation error on the receiver of MIMO AF two-way relaying.

Recently,  two efficient relay-beamformers for the dual-hop MIMO
multi-relay networks have been presented in~\cite{10}, which are
based on matched filter (MF) and regularized zero-forcing (RZF), and
utilize QR decomposition (QRD) of the effective system channel
matrix at the destination node~\cite{13}. The beamformers at the
relay nodes can exploit the distributed array gain by diagonalizing
both the $\mathcal{S}$-$\mathcal{R}$ and $\mathcal{R}$-$\mathcal{D}$
channels. The QRD can exploit the intranode array gain by SIC
detection. These two beamforming schemes not only have advantageous
performance than that of the conventional schemes like QR-P-QR or
QR-P-ZF in~\cite{9}, but also have lower complexity because they
only need one QR decomposition at destination.  However, such
advantageous performances are based on perfect CSI, and the
imperfect CSIs of $\mathcal{R}$-$\mathcal{D}$ are not considered. It
is worth to know the capacity performances of these efficient
beamforming schemes and the validity of the scaling law in~\cite{4},
when the imperfect $\mathcal{R}$-$\mathcal{D}$ CSI presents at relays. In
addition, the asymptotic capacities for these beamforming schemes
and the optimal regularizing factor in the MF-RZF beamforming are
not derived in~\cite{10}.

Inspired by the works on  imperfect CSI and~\cite{10}, in this
paper, we investigate the performance of three efficient beamforming
schemes for dual-hop MIMO relay networks under the condition of
imperfect $\mathcal{R}$-$\mathcal{D}$ CSI at relays. The three
beamforming schemes are based on matched filter (MF), zero-forcing
(ZF) precoding and regularized zero forcing (RZF) precoding
techniques. We first derive the ergodic capacities in terms of the
instantaneous CSIs of $\mathcal{S}$-$\mathcal{R}$ and
$\mathcal{R}$-$\mathcal{D}$. Using Law of Large Number, we obtain
the asymtotic capacities for the three beamformers. Based on the
asymptotic capacity of MF-RZF, we derive the optimal regularizing
factor. Simulation results show that the asymptotic capacities match
with the ergodic capacities very well. Analysis and simulations
demonstrate that the capacity of MF-ZF drops fast when
$\mathcal{R}$-$\mathcal{D}$ CSI error increases. We observe that
MF-RZF always outperforms MF as in~\cite{10} when perfect CSI is
available at relays, while MF-RZF underperforms  MF when
$\mathcal{R}$-$\mathcal{D}$ CSI error is in presence. However, the
optimized MF-RZF always outperforms MF for any power of CSI error. The
ceiling effect of capacity is also discussed in this paper.

The remainder of this paper is organized as follows.  In Section
\uppercase \expandafter {\romannumeral 2}, the system model of a
dual-hop MIMO multi-relay network is introduced. In Section \uppercase
\expandafter {\romannumeral 3}, we briefly explain the three beamforming
schemes and QR decomposition. In Section \uppercase \expandafter
{\romannumeral 4}, we derive the instantaneous SNR on each antenna link at the
destination with $\mathcal{R}$-$\mathcal{D}$ CSI error at each relay. Using Law of Large Number, we obtain the asymptotic
capacities in Section \uppercase \expandafter
{\romannumeral 5}. Section \uppercase \expandafter
{\romannumeral 6} devotes to simulation results followed by
conclusion in Section \uppercase \expandafter {\romannumeral 7}.

\begin{figure}[!t]
\centering
\includegraphics[width=3.5in]{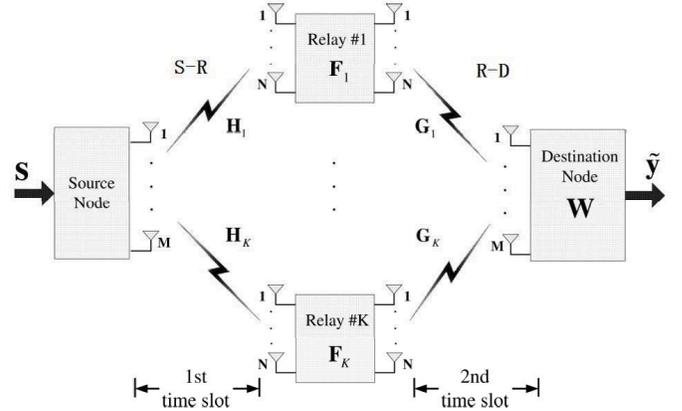}
\caption{System model of a dual-hop MIMO multi-relay network with relay
beamforming and SIC at the destination.}
\end{figure}
In this paper, boldface lowercase letter and boldface uppercase
letter represent vectors and matrices, respectively. Notations $\left( {\bf{A}} \right)_{i}$ and  $\left( {\bf{A}} \right)_{i,j}$
denote the $i$-th row and $(i,j)$-th entry of the matrix
${\bf{A}}$. Notations $\mathrm{tr}(\cdot)$ and $(\cdot)^H$ denote
trace and conjugate transpose operation of a matrix respectively. Term
$\mbox{\boldmath $\mathbf{I}$}_N$ is an $N{\times}N$ identity
matrix. $\|\mathbf{a}\|$ stands for the Euclidean norm of a vector $\mathbf{a}$, and  $\overset{w.p.}{\longrightarrow}$ represents convergence with probability one. Finally, we denote the expectation operation by $\mathrm{E}\left[\cdot\right]$.

\section{System Model}
The considered MIMO multi-relay network consists of a single source and
destination node both equipped with $M$ antennas, and $K$
$N$-antenna relay nodes distributed between the source-destination
pair as illustrated in Fig.~1. When the source node implements
spatial multiplexing, the requirement $N \ge M$ must be
satisfied if each relay node is supposed to support all the $M$
independent data streams. We consider half-duplex non-regenerative
relaying throughout this paper, where it takes two non-overlapping
time slots for the data to be transmitted from the source to the
destination node via the source-to-relay ($\mathcal{S}$-$\mathcal{R}$) and relay-to-destination
($\mathcal{R}$-$\mathcal{D}$) channels. Due to deep large-scale fading effects produced by the long
distance, we assume that there is no direct link between the source
and destination. In this paper, perfect CSI of $\mathcal{S}$-$\mathcal{R}$ and imperfect CSI
of $\mathcal{R}$-$\mathcal{D}$ are assumed to be available at relay nodes. In a practical
system, each relay needs to transmit training sequences or pilots to
acquire the CSI of all the forward channels. Here we assume that the
destination node can estimate the CSI of all the forward channels
during the pilot phase. But due to large number of relay nodes, a
feedback delay and training overhead is expected at each relay. So perfect CSIs of $\mathcal{R}$-$\mathcal{D}$ are hard to be obtained at relays. We
assume that CSI of $\mathcal{R}$-$\mathcal{D}$ is imperfect with a Gaussian distributed error
at relay nodes as in~\cite{27}, and the destination
node only knows the statistical distribution of these $\mathcal{R}$-$\mathcal{D}$ CSI errors.

In the first time slot, the source node broadcasts the signal to all
the relay nodes through $\mathcal{S}$-$\mathcal{R}$ channels. Let $M{\times}1$ vector $\mbox{\boldmath
$\mathbf{s}$}$ be the transmit signal vector satisfying the power
constraint ${\rm E}\left\{ {{\bf{ss}}^H } \right\} = \left( {{P
\mathord{\left/
 {\vphantom {P M}} \right.
 \kern-\nulldelimiterspace} M}} \right){\bf{I}}_M$, where $P$ is defined
as the transmit power at the source node. Let ${\bf{H}}_k  \in
\mathbb{C}^{N \times M}$, $(k=1,...,K)$ stand for the $\mathcal{S}$-$\mathcal{R}$ MIMO
channel matrix from the source node to the $k$-th relay node. All the
relay nodes are supposed to be located in a cluster. Then all the
$\mathcal{S}$-$\mathcal{R}$ channels ${\bf{H}}_1,\cdots,{\bf{H}}_K$ can be supposed to
be independently and identically distributed ($i.i.d.$) and
experience the same Rayleigh flat fading. Assume that the entries of
${\bf{H}}_k$ are zero-mean complex Gaussian random variables with
variance one. Then the corresponding received signal at the $k$-th
relay can be written as
\begin{equation}
{\bf{r}}_k  = {\bf{H}}_k {\bf{s}} + {\bf{n}}_k,
\end{equation}
where the term $\mbox{\boldmath$\mathbf{n}$}_{k}$ is the
spatio-temporally white zero-mean complex additive Gaussian noise
vector, independent across $k$, with the covariance matrix ${\rm
E}\left\{ {{\bf{n}}_k {\bf{n}}_k^H } \right\} = \sigma _1^2
{\bf{I}}_N$. Therefore, noise variance $\sigma _1^2$ represents the
noise power at each relay node.

In the second time slot, firstly each relay node performs linear
processing by multiplying ${\bf{r}}_k$ with an $N \times N$ beamforming
matrix ${\bf{F}}_k$. This ${\bf{F}}_k$ is based on its perfect $\mathcal{S}$-$\mathcal{R}$ CSI ${\bf{H}}_k$ and  imperfect $\mathcal{R}$-$\mathcal{D}$ CSI ${\widehat{\bf{G}}}_k$.
Consequently, the signal vector sent from the $k$-th relay node is
\begin{equation}
{\bf{t}}_k  = {\bf{F}}_k {\bf{r}}_k.
\end{equation}
From more practical consideration, we assume that each relay node
has its own power constraint satisfying ${\rm E}\left\{
{{\bf{t}}_k^H {\bf{t}}_k } \right\} \leq Q$, which is independent of
power $P$. Hence a power constraint condition of ${\bf{t}}_k$ can be
derived as
\begin{equation}
p\left( {{\bf{t}}_k } \right) = \mathrm{tr}\left\{ {{\bf{F}}_k \left(
{\frac{P}{M}{\bf{H}}_k {\bf{H}}_k^H  + \sigma _1^2 {\bf{I}}_N }
\right){\bf{F}}_k^H } \right\} \le Q.
\end{equation}
After linear relay beamforming processing, all the relay nodes forward their data simultaneously to the destination.
Thus the signal vector received by the destination can be expressed as
\begin{align}\label{rzf5}
 {\bf{y}} & = \sum_{k=1}^{K}{\bf{G}}_k{\bf{t}}_k
+ {\bf{n}}_d  =  \sum_{k=1}^{K} {\bf{G}}_k {\bf{F}}_k  {\bf{H}}_k
{\bf{s}} +\sum_{k=1}^{K} {\bf{G}}_k {\bf{F}}_k {\bf{n}}_k
+{\bf{n}}_d,
\end{align}
where ${\bf{G}}_k$, under the same assumption as ${\bf{H}}_k$, is
the $M \times N$ $\mathcal{R}$-$\mathcal{D}$ channel matrix between the $k$-th relay node
and the destination. ${\bf{n}}_d \in \mathbb{C}^M$, satisfying ${\rm
E}\left\{ {{\bf{n}}_d {\bf{n}}_d^H } \right\} = \sigma _2^2
{\bf{I}}_M$, denotes the zero-mean white circularly symmetric
complex additive Gaussian noise vector at the destination node with the
noise power $\sigma _2^2$.

\section{Relay Beamforming and QR Detection}
In this section we will consider matched-filter (MF),  zero-forcing
(ZF) and regularized zero-forcing (RZF) beamforming at  relays. The
destination applied QRD detection to successively cancel the
interference from other antennas.

\subsection{Beamforming at Relay Nodes}
Denote $\widehat{\bf{G}}_k$ as the imperfect CSI of $\mathcal{R}$-$\mathcal{D}$ at the $k$-th
relay. When MF is chosen, beamforming at the $k$-th relay is
\begin{equation}
{\bf{F}}^{MF}_k  =  \widehat{\bf{G}}_k^H {\bf{H}}_k^H,
\end{equation}
where we set MF as both the receiver of $\mathcal{S}$-$\mathcal{R}$ channel and the precoder of $\mathcal{R}$-$\mathcal{D}$ channel.

When MF-ZF is chosen, beamforming at the $k$-th relay is
\begin{equation}
{\bf{F}}_k^{MF - ZF}  =  \widehat{\bf{G}}_k^H \left( {\widehat{\bf{G}}_k \widehat{\bf{G}}_k^H } \right)^{ - 1} {\bf{H}}_k^H,
\end{equation}
where we set ZF as the precoder of $\mathcal{R}$-$\mathcal{D}$ channel. Here, the requirement that $N \ge M$ is also indispensable for the inversion of $\left( {\widehat{\bf{G}}_k} \widehat{\bf{G}}_k^H  \right)$.

When MF-RZF is chosen, beamforming at the $k$-th relay is
\begin{equation}
{\bf{F}}_k^{MF - RZF}  =  \widehat{\bf{G}}_k^H \left( {\widehat{\bf{G}}_k \widehat{\bf{G}}_k^H  + \alpha_k {\bf{I}}_M } \right)^{ - 1} {\bf{H}}_k^H,
\end{equation}
where we set RZF as the precoder of $\mathcal{R}$-$\mathcal{D}$ channel. Note that MF-ZF is a special case of MF-RZF when $\alpha_k=0$.

\subsection{QR Decomposition and SIC Detection}
QR-decomposition (QRD) detector is utilized
as the destination receiver $\bf{W}$ in this paper, which is proved
to be asymptotically equivalent to that of the maximum-likelihood detector (MLD)~\cite{13}.
Let $\sum_{k=1}^{K}{\bf{G}}_k {\bf{F}}_k {\bf{H}}_k={\bf{H}}_{\mathcal {S}\mathcal {D}} $. Then (4) can be rewritten as
\begin{equation}
 {\bf{y}}= {\bf{H}}_{\mathcal {S}\mathcal {D}} {\bf{s}}
+ \widehat{\bf{n}},
\end{equation}
where ${\bf{H}}_{\mathcal {S}\mathcal {D}}$ represents the effective channel between the source
and destination node, and $\widehat{\bf{n}}=\sum_{k=1}^K {\bf{G}}_k {\bf{F}}_k {\bf{n}}_k
+{\bf{n}}_d$ is the effective noise vector cumulated from
the noise $\bf{n}_k$ at the $k$-th relay node, and
 the noise vector ${\bf{n}}_d$ at the destination. Finally, in order to cancel the interference from other antennas,
QR decomposition of the effective channel is implemented as
\begin{equation}\label{5}
  {\bf{H}}_{\mathcal {S}\mathcal {D}}={\bf{Q}}_{\mathcal {S}\mathcal {D}} {\bf{R}}_{\mathcal {S}\mathcal {D}},
\end{equation}
where ${\bf{Q}}_{\mathcal {S}\mathcal {D}}$ is an $M \times M$
unitary matrix and ${\bf{R}}_{\mathcal {S}\mathcal {D}}$ is an $M
\times M$ right upper triangular matrix. Therefore the QRD  detector
at destination node is chosen as: ${\bf{W}}={\bf{Q}}_{\mathcal
{S}\mathcal {D}}^H $, and the signal vector after QRD detection becomes
\begin{equation}
 \tilde {\bf{y}}={\bf{Q}}_{\mathcal {S}\mathcal {D}}^H {\bf{y}}= {\bf{R}}_{\mathcal {S}\mathcal {D}} {\bf{s}}
+ {\bf{Q}}_{\mathcal {S}\mathcal {D}}^H {\bf{n}}.
\end{equation}

A power control factor $\widehat{\rho_k} $ is set with $\bf{F}_k$ in (2) to
guarantee that the $k$-th relay transmit power is equal to $Q$. The transmit signal from each
relay node after linear beamforming and power control becomes
\begin{equation}
 {\bf{t}}_k  = \widehat{\rho_k} {\bf{F}}_k {\bf{r}}_k,
\end{equation}
where the power control factor $\widehat{\rho_k}$ can be derived from (3) as
\begin{equation}
 \widehat{\rho _k}  = \biggl ( Q \biggl / \mathrm{tr} \biggl \{ {\bf{F}}_k \left( {\frac{P}{M}{\bf{H}}_k {\bf{H}}_k^H  + \sigma _1^2 {\bf{I}}_N } \right){\bf{F}}_k^H \biggl \} \biggl ) ^{\frac{1}{2}}.
\end{equation}

From the cut-set theorem in network information theory [4], the upper bound
capacity of the dual-hop MIMO relay networks is
\begin{equation}
C_{upper}  =  \mathrm{E}_{\left\{ {{\bf{H}}_k } \right\}_{k = 1}^K } \left\{ {\frac{1}{2}\log \det \left( {{\bf{I}}_M  + \frac{P}{{M \sigma _1^2 }}\sum\limits_{k = 1}^K {{\bf{H}}_k^H {\bf{H}}_k } } \right)} \right\}.
\end{equation}

\section{ Ergodic Capacities}
In this section, we will derive the ergodic capacities of the three relay
beamformers under the condition of imperfect $\mathcal{R}$-$\mathcal{D}$ CSI at relays. These ergodic capacities are based on the instantaneous SNR of source-to-destination channel.
For the $k$-th relay, we denote the accurate $\mathcal{R}$-$\mathcal{D}$ channel matrix as ${\bf{G}}_k$. We assume
the $\mathcal{R}$-$\mathcal{D}$ CSI error caused by large delay and reciprocity mismatch to be complex Gaussian distributed, and model the imperfect $\mathcal{R}$-$\mathcal{D}$ CSI received at the $k$-th relay as~\cite{27},
\begin{equation}
{\widehat{\bf{G}}}_k={\bf{G}}_k+e{\bf{\Omega}}_{k},
\end{equation}
where ${\bf{\Omega}}_{k}$ is a matrix independent of ${\bf{G}}_k$, whose entries are $i.i.d$ zero-mean complex Gaussian, with unity variance, and $e$ is the gain of channel loss. Therefore the power of CSI error is $e^{2}$~\cite{14}. In this paper, we consider $e\ll 1$.

\subsection{Preliminaries}
In this subsection, we give some lemmas associated with ${\bf{\Omega}}_k$, which will be used to derive the instantaneous SNR of source-to-destination channel.
\newtheorem{theorem}{Theorem}
\newtheorem{lemma}[theorem]{Lemma}
\begin{lemma}
$\mathrm{E} \left [ \mathrm{tr}\left( {\bf{A}}{\bf{\Omega}}_k^H \right){\bf{\Omega}}_k\right ]={\bf{A}}$
for any complex matrix ${\bf{A}}\in\mathbb{C}^{M\times N}$.
\end{lemma}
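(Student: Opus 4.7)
The plan is to verify the identity entrywise by expanding the trace as a sum over indices and then invoking linearity of expectation together with the second-moment property of the i.i.d.\ unit-variance complex Gaussian entries of $\mathbf{\Omega}_k$.

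First I would write $\mathrm{tr}(\mathbf{A}\mathbf{\Omega}_k^H) = \sum_{i,j} A_{ij}\overline{(\mathbf{\Omega}_k)_{ij}}$ using the definition of trace and conjugate transpose. This is a scalar, so multiplying by $\mathbf{\Omega}_k$ and reading off the $(m,n)$ entry of the matrix inside the expectation gives
\begin{equation*}
\bigl[\mathrm{tr}(\mathbf{A}\mathbf{\Omega}_k^H)\mathbf{\Omega}_k\bigr]_{mn} = \sum_{i,j} A_{ij}\,\overline{(\mathbf{\Omega}_k)_{ij}}\,(\mathbf{\Omega}_k)_{mn}.
\end{equation*}
Taking expectations and pulling the deterministic $A_{ij}$ outside reduces the problem to computing $\mathrm{E}\!\left[\overline{(\mathbf{\Omega}_k)_{ij}}(\mathbf{\Omega}_k)_{mn}\right]$.

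Next I would use the assumption stated just before the lemma that the entries of $\mathbf{\Omega}_k$ are i.i.d.\ zero-mean complex Gaussian with unit variance. For such a collection, $\mathrm{E}\!\left[\overline{(\mathbf{\Omega}_k)_{ij}}(\mathbf{\Omega}_k)_{mn}\right] = \delta_{im}\delta_{jn}$, since distinct entries are independent and zero-mean while each entry has $\mathrm{E}[|\omega|^2] = 1$. Substituting this Kronecker product back collapses the double sum to $A_{mn}$, giving $\mathrm{E}\!\left[\mathrm{tr}(\mathbf{A}\mathbf{\Omega}_k^H)\mathbf{\Omega}_k\right]_{mn} = A_{mn}$ for every $(m,n)$, which is the claim.

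There is essentially no obstacle here: the result is a one-line second-moment computation dressed up in matrix notation. The only thing to be mindful of is the conjugation convention, i.e.\ making sure the $\mathbf{\Omega}_k^H$ inside the trace contributes $\overline{(\mathbf{\Omega}_k)_{ij}}$ rather than $(\mathbf{\Omega}_k)_{ij}$, so that one pairs a random variable with its conjugate and invokes $\mathrm{E}[|\omega|^2]=1$ instead of the circularly-symmetric identity $\mathrm{E}[\omega^2]=0$, which would have given the zero matrix. No restriction on $\mathbf{A}$ beyond its size is needed, and the argument extends verbatim to any matrix whose entries are uncorrelated, zero-mean, unit-variance and proper complex.
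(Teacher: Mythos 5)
Your proposal is correct and follows essentially the same route as the paper: expand the trace entrywise, use linearity of expectation, and invoke independence plus the unit second moment $\mathrm{E}\bigl[(\mathbf{\Omega}_k)_{i,j}(\mathbf{\Omega}_k)_{i,j}^{*}\bigr]=1$ to collapse the double sum to $(\mathbf{A})_{m,n}$. Your added remark about pairing each entry with its conjugate (rather than using the pseudo-covariance $\mathrm{E}[\omega^2]=0$) is a point the paper leaves implicit, but the argument is the same.
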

\begin{proof}
Note that $\mathrm{E}\left[ ({\bf{\Omega}}_k)_{i,j} \right]=0$ and $\mathrm{E}\left[ ({\bf{\Omega}}_k)_{i,j} ({\bf{\Omega}}_k)_{i,j}^{*} \right]=1$. Since the entries in ${\bf{\Omega}}_k$ are independent, we have
\begin{multline}
\mathrm{E}\left[\left(\mathrm{tr}\left( {\bf{A}} {\bf{\Omega}}_k^H\right){\bf{\Omega}}_k\right)_{i,j}\right]\\=\mathrm{E}\left[\Sigma_{m=1}^M \Sigma_{l=1}^N \left({\bf{A}}\right)_{m,l}\left({\bf{\Omega}}_k\right)_{m,l}^{*}\left({\bf{\Omega}}_k\right)_{i,j}\right]
\\=\left({\bf{A}}\right)_{i,j}.
\end{multline}
\end{proof}
\begin{lemma}
$\mathrm{E} \left [ \mathrm{tr}\left( {\bf{A}}{\bf{\Omega}}_k^H \right)\mathrm{tr}\left( {\bf{B}}{\bf{\Omega}}_k \right)\right ]=\mathrm{tr}\left(\mathbf{AB}\right)$, for any $\mathbf{A}\in\mathbb{C}^{M\times N}$ and $\mathbf{B}\in\mathbb{C}^{N\times M}$.
\end{lemma}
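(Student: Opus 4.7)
The plan is to prove this lemma by direct expansion in coordinates, mirroring the index-level argument already used for Lemma 1. Since everything reduces to a second-moment computation of the i.i.d.\ entries of $\boldsymbol{\Omega}_k$, no distributional machinery beyond the stated zero-mean, unit-variance, independence assumptions is needed.

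First, I would rewrite each trace as a scalar double sum. Using that $\mathrm{tr}(\mathbf{X}\mathbf{Y}) = \sum_{i,j} X_{i,j} Y_{j,i}$, I get
\begin{equation*}
\mathrm{tr}\!\left(\mathbf{A}\boldsymbol{\Omega}_k^H\right)=\sum_{m=1}^{M}\sum_{l=1}^{N} (\mathbf{A})_{m,l}\,(\boldsymbol{\Omega}_k)_{m,l}^{*},\qquad
\mathrm{tr}\!\left(\mathbf{B}\boldsymbol{\Omega}_k\right)=\sum_{p=1}^{N}\sum_{q=1}^{M} (\mathbf{B})_{p,q}\,(\boldsymbol{\Omega}_k)_{q,p}.
\end{equation*}
Multiplying the two expressions yields a quadruple sum whose only random piece is $(\boldsymbol{\Omega}_k)_{m,l}^{*}(\boldsymbol{\Omega}_k)_{q,p}$, so linearity of expectation lets me pull the deterministic factors $(\mathbf{A})_{m,l}(\mathbf{B})_{p,q}$ outside.

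Next, I invoke the key moment identity: since the entries of $\boldsymbol{\Omega}_k$ are independent, zero-mean, unit-variance complex Gaussians,
\begin{equation*}
\mathrm{E}\!\left[(\boldsymbol{\Omega}_k)_{m,l}^{*}(\boldsymbol{\Omega}_k)_{q,p}\right]=\delta_{m,q}\,\delta_{l,p}.
\end{equation*}
Substituting this back collapses two of the four sums by setting $q=m$ and $p=l$, leaving
\begin{equation*}
\mathrm{E}\!\left[\mathrm{tr}\!\left(\mathbf{A}\boldsymbol{\Omega}_k^H\right)\mathrm{tr}\!\left(\mathbf{B}\boldsymbol{\Omega}_k\right)\right]=\sum_{m=1}^{M}\sum_{l=1}^{N}(\mathbf{A})_{m,l}\,(\mathbf{B})_{l,m},
\end{equation*}
which is precisely $\mathrm{tr}(\mathbf{AB})$ by the definition of the trace of the product of an $M\times N$ matrix with an $N\times M$ matrix.

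There is no real obstacle here; the only thing to be careful about is matching the row/column indices of $\boldsymbol{\Omega}_k^H$ to those of $\boldsymbol{\Omega}_k$ correctly when expanding the first trace, since a sign flip in the index conventions would cause the final contraction to produce $\mathrm{tr}(\mathbf{A}\mathbf{B}^T)$ or similar rather than $\mathrm{tr}(\mathbf{AB})$. Once the indices are kept straight, the computation is a one-line application of the second-moment formula, so the proof is essentially the same style and length as the proof of Lemma 1.
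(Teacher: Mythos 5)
Your proof is correct and follows essentially the same route as the paper's: expand both traces as double sums over entries, apply the second-moment identity $\mathrm{E}\bigl[(\boldsymbol{\Omega}_k)_{m,l}^{*}(\boldsymbol{\Omega}_k)_{q,p}\bigr]=\delta_{m,q}\delta_{l,p}$, and contract to $\mathrm{tr}(\mathbf{AB})$. Your version is in fact slightly cleaner than the paper's, since you use distinct index names for the two traces before invoking the moment identity, whereas the paper reuses the same indices in both sums.
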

\begin{proof}
\begin{multline}
\mathrm{E} \left [ \mathrm{tr}\left( {\bf{A}}{\bf{\Omega}}_k^H \right)\mathrm{tr}\left( {\bf{B}}{\bf{\Omega}}_k \right)\right ]\\=\mathrm{E} \left [ \left(\Sigma_{m=1}^M \Sigma_{n=1}^N \left({\bf{A}}\right)_{m,n}\left({\bf{\Omega}}_k\right)_{m,n}^*\right)\right.\\\left.\left(\Sigma_{n=1}^N \Sigma_{m=1}^M \left({\bf{B}}\right)_{n,m}\left({\bf{\Omega}}_k\right)_{m,n}\right)\right ]\\
=\Sigma_{m=1}^M \Sigma_{n=1}^N \left({\bf{A}}\right)_{m,n}\left({\bf{B}}\right)_{n,m}=\mathrm{tr}\left(\mathbf{AB}\right).
\end{multline}
\end{proof}
\begin{lemma}
$\mathrm{E} \left [ {\bf{\Omega}}_k {\bf{A}}{\bf{\Omega}}_k^H \right ]=\mathrm{tr}\left({\bf{A}}\right){\bf{I}}_M$
 for any $\mathbf{A}\in\mathbb{C}^{M\times M}$.
\end{lemma}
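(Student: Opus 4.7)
The plan is to mimic the entry-wise verification used in the proofs of Lemma 1 and Lemma 2. First I would expand the $(i,j)$-th entry of the random matrix ${\bf{\Omega}}_k {\bf{A}} {\bf{\Omega}}_k^H$ via the definition of matrix multiplication, writing
\[
\left({\bf{\Omega}}_k {\bf{A}} {\bf{\Omega}}_k^H\right)_{i,j} \;=\; \sum_{m} \sum_{n} ({\bf{\Omega}}_k)_{i,m}\,({\bf{A}})_{m,n}\,({\bf{\Omega}}_k)_{j,n}^{*},
\]
where I have used $({\bf{\Omega}}_k^H)_{n,j} = ({\bf{\Omega}}_k)_{j,n}^{*}$ to turn the conjugate transpose into a conjugated entry of ${\bf{\Omega}}_k$.

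Next, I would push the expectation inside the deterministic double sum and invoke the standing assumption that the entries of ${\bf{\Omega}}_k$ are i.i.d.\ zero-mean complex Gaussian with unit variance. By independence and zero mean, the mixed second moment $\mathrm{E}[({\bf{\Omega}}_k)_{i,m}({\bf{\Omega}}_k)_{j,n}^{*}]$ vanishes unless the two $\boldsymbol{\Omega}$-indices agree, and equals one when they do; that is, $\mathrm{E}[({\bf{\Omega}}_k)_{i,m}({\bf{\Omega}}_k)_{j,n}^{*}] = \delta_{i,j}\delta_{m,n}$. Substituting this into the double sum collapses it to $\delta_{i,j}\sum_{m}({\bf{A}})_{m,m} = \delta_{i,j}\,\mathrm{tr}({\bf{A}})$, and reassembling these entries gives $\mathrm{E}[{\bf{\Omega}}_k {\bf{A}}{\bf{\Omega}}_k^H] = \mathrm{tr}({\bf{A}}){\bf{I}}_M$, as claimed.

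There is no real obstacle here beyond careful index bookkeeping and remembering that the Hermitian transpose conjugates the entries of ${\bf{\Omega}}_k$; once the second-moment identity above is in hand, the computation is essentially identical in flavour to the proofs of Lemmas 1 and 2 just given. The only mild subtlety is dimensional: since ${\bf{\Omega}}_k$ is $M \times N$ (from the model $\widehat{\bf{G}}_k = {\bf{G}}_k + e{\bf{\Omega}}_k$), the product ${\bf{\Omega}}_k {\bf{A}}{\bf{\Omega}}_k^H$ requires ${\bf{A}}$ to be $N \times N$, and then the Kronecker delta $\delta_{i,j}$ runs over $i,j \in \{1,\dots,M\}$, yielding precisely the $M \times M$ identity ${\bf{I}}_M$ on the right-hand side.
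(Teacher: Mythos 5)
Your proof is correct and follows essentially the same route as the paper's: expand the $(i,j)$-th entry of ${\bf{\Omega}}_k{\bf{A}}{\bf{\Omega}}_k^H$, push the expectation through the double sum, and use $\mathrm{E}[({\bf{\Omega}}_k)_{i,m}({\bf{\Omega}}_k)_{j,n}^{*}]=\delta_{i,j}\delta_{m,n}$ to collapse it to $\delta_{i,j}\,\mathrm{tr}({\bf{A}})$. Your closing remark on dimensions is a genuine improvement in precision: the statement's ``${\bf{A}}\in\mathbb{C}^{M\times M}$'' is only consistent with an $M\times N$ matrix ${\bf{\Omega}}_k$ when $N=M$, and your final line also fixes the paper's stray $\delta[m-n]$ (which should be $\delta[i-j]$) in its concluding equality.
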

\begin{proof}
\begin{multline}
\mathrm{E}\left[({\bf{\Omega}}_k {\bf{A}} {\bf{\Omega}}_k^H)_{i,j}\right]=\sum_{m=1}^M \sum_{n=1}^M \mathrm{E}\left[({\bf{\Omega}}_k)_{i,m} ({\bf{A}})_{m,n} ({\bf{\Omega}}_k^H)_{n,j}\right]\\=\sum_{m=1}^M \sum_{n=1}^M ({\bf{A}})_{m,n} \mathrm{E}\left[({\bf{\Omega}}_k)_{i,m}  ({\bf{\Omega}}_k)_{j,n}^{*} \right]=\mathrm{tr}\left(\mathbf{A}\right)\delta\left[m-n\right],
\end{multline}
where $\delta\left[x\right]=1$ for $x=0$ and $0$ otherwise.
\end{proof}
\subsection{Instantaneous SNR of MF Beamforming}

In the presence of imperfect $\mathcal{R}$-$\mathcal{D}$ CSI at relays, the MF beamforming at the
$k$-th relay is
\begin{equation}
{{\bf{F}}}_{k}={\widehat{\bf{G}}}_k^H {\bf{H}}_k^H.
\label{eq1}
\end{equation}
Thus, the signal vector received by the destination is
\begin{equation}
{\bf{y}}=\sum_{k = 1}^K \widehat{\rho}_k {\bf{G}}_k {\widehat{\bf{G}}}_k^H {\bf{H}}_k^H \left( {\bf{H}}_k {\bf{s}} + {\bf{n}}_k \right) + {\bf{n}}_{d}.
\label{eq2}
\end{equation}
Since $e\ll 1$ and the $k$-th relay node only knows an imperfect
$\mathcal{R}$-$\mathcal{D}$ CSI ${\widehat{\bf{G}}}_k$, the power
control factor becomes
\begin{equation}
\begin{split}
\widehat{\rho _k}  &= \left ( Q \biggl / \mathrm{tr} \left \{ {\widehat{\bf{G}}}_k^H {\bf{H}}_k^H \left( {\frac{P}{M}{\bf{H}}_k {\bf{H}}_k^H  + \sigma _1^2 {\bf{I}}_N } \right){\bf{H}}_k {\widehat{\bf{G}}}_k \right \} \right ) ^{\frac{1}{2}}
\\&\cong \rho_k^{\frac{1}{2}}\left(1-\frac{e{v}_k}{2{u}_k}\right),
\end{split}
\end{equation}
where
\begin{equation}
{u}_k=\mathrm{tr}\left ({{\bf{G}}}_k^H {\bf{H}}_k^H \left( {\frac{P}{M}{\bf{H}}_k {\bf{H}}_k^H  + \sigma _1^2 {\bf{I}}_N } \right){\bf{H}}_k {{\bf{G}}}_k\right ),
\end{equation}
\begin{equation}
v_k=\mathrm{tr}\left ( {\bf{H}}_k^H \left( {\frac{P}{M}{\bf{H}}_k {\bf{H}}_k^H  + \sigma _1^2 {\bf{I}}_N } \right){\bf{H}}_k \left ({{\bf{G}}}_k {\bf{\Omega}}_k^H +{\bf{\Omega}}_k {{\bf{G}}}_k^H\right )\right ),
\end{equation}
and
\begin{equation}
\rho_k=\left(\frac{Q}{{u}_k}\right)^{\frac{1}{2}}.
\end{equation}
Using Lemma~1 and Lemma~2, through some manipulations and omitting
some relatively small terms, (\ref{eq2}) becomes
\begin{equation} \label{eq3}
\begin{split}
{\bf{y}}&\cong\sum_{k = 1}^K {\rho}_k {\bf{G}}_k {\bf{G}}_k^H {\bf{H}}_k^H {\bf{H}}_k {\bf{s}}
 + e \sum_{k = 1}^K {\rho}_k {\bf{G}}_k  {\bf{\Omega}}_k^H {\bf{H}}_k^H {\bf{H}}_k {\bf{s}}
 \\ &+ \sum_{k = 1}^K {\rho}_k {\bf{G}}_k {\bf{G}}_k^H {\bf{H}}_k^H {\bf{n}}_k
 + {\bf{n}}_d.
\end{split}
\end{equation}
We observe that the second term in the right-hand side of (\ref{eq3}) is the additional noises introduced by the $\mathcal{R}$-$\mathcal{D}$ CSI error. The third term is the noise introduced by the noises at each relays.
We denote the last three terms in the right-hand side of (\ref{eq3}) as
\begin{equation}
{\widehat{\bf{n}}}=e \sum_{k = 1}^K \rho_k {\bf{G}}_k  {\bf{\Omega}}_k^H {\bf{H}}_k^H {\bf{H}}_k {\bf{s}}  + \sum_{k = 1}^K \rho_k {\bf{G}}_k {\bf{G}}_k^H {\bf{H}}_k^H {\bf{n}}_k + {\bf{n}}_d,
\end{equation}
 and refer to it as the effective post-processing noise. Denote
\begin{equation}
{\bf{H}}_{{\mathcal {S}\mathcal {D}},MF}=\sum_{k=1}^K \rho_k {\bf{G}}_k {\bf{G}}_k^H {\bf{H}}_k^H {\bf{H}}_k,
\end{equation}
as the effective transmitting matrix of the whole network.
Then (\ref{eq3}) becomes
\begin{equation}
{\bf{y}}={\bf{H}}_{{\mathcal {S}\mathcal {D}},MF} \mathbf{s}+\widehat{\bf{n}}.
\end{equation}
Using QRD ${\bf{H}}_{{\mathcal {S}\mathcal {D}},MF}=\mathbf{Q}_{MF} \mathbf{R}_{MF}$ at the destination,
we have
\begin{equation}
\mathbf{Q}_{MF}^H{\bf{y}}=\mathbf{R}_{MF}\mathbf{s}+\mathbf{Q}_{MF}^H\widehat{\bf{n}}.
\end{equation}
So the power of the $m$th transmitted signal stream becomes $P/M {(\mathbf{R}_{MF})}_{m,m}^2$.
We now calculate the covariance matrix of the effective post-processing noise $\widehat{\bf{n}}$ as,
\begin{equation}
\begin{split}
&\mathrm{E} \biggl [\mathbf{Q}_{MF}^H{\widehat{\bf{n}}} (\mathbf{Q}_{MF}^H{\widehat{\bf{n}}})^H \biggl ]\\
&=  e^2 \frac{P}{M} \sum_{k=1}^K {\rho_k}^2 \mathbf{Q}_{MF}^H{\bf{G}}_k \mathrm{tr} \left(   ({\bf{H}}_k^H {\bf{H}}_k)^2\right  ) {\bf{G}}_k^H \mathbf{Q}_{MF}\\&+ \sigma_1^2 \sum_{k=1}^K \rho_k^2 \mathbf{Q}_{MF}^H{\bf{G}}_k {\bf{G}}_k^H {\bf{H}}_k^H {\bf{H}}_k {\bf{G}}_k {\bf{G}}_k^H \mathbf{Q}_{MF}+ \sigma_2^2 \mathbf{I}_M,
\end{split}
\end{equation}
where we used Lemma~3.
So the effective noise power of the $m$th data stream is
\begin{equation}
\begin{split}
&\mathrm{E}\biggl[(\widehat{\bf{n}}\widehat{\bf{n}}^H)_{m,m}\biggl]\\&=
e^2 \frac{P}{M} \sum_{k=1}^K {\rho_k}^2  \mathrm{tr} \left(   ({\bf{H}}_k^H {\bf{H}}_k)^2\right  ) \|\left(\mathbf{Q}_{MF}^H {\bf{G}}_k\right)_m \|^2\\&+ \sigma_1^2 \sum_{k=1}^K \rho_k^2 \|\left(\mathbf{Q}_{MF}^H{\bf{G}}_k {\bf{G}}_k^H {\bf{H}}_k^H\right)_m \|^2+ \sigma_2^2.
\end{split}
\end{equation}
Thus, the post-processing SNR per symbol of the $m$th stream can be expressed as (\ref{eq4}) at the top of next page.
\begin{figure*}[!t]
\begin{equation}
\gamma_m^{MF}=\frac {\frac{P}{M} {(\mathbf{R}_{MF})}_{m,m}^2} {\underbrace{e^2 \frac{P}{M} \sum_{k=1}^K {\rho_k}^2  \mathrm{tr} \left(   ({\bf{H}}_k^H {\bf{H}}_k)^2\right  ) \|\left(\mathbf{Q}_{MF}^H {\bf{G}}_k \right)_m\|^2}_{channel-error-generated~ noise~power} +
\sigma_1^2 \sum_{k=1}^K \rho_k^2 \|\left(\mathbf{Q}_{MF}^H{\bf{G}}_k {\bf{G}}_k^H {\bf{H}}_k^H\right)_m \|^2+ \sigma_2^2}
\label{eq4}
\end{equation}
\end{figure*}

Compared to the covariance of the effective noise under the
condition of  perfect CSI ($e=0$), we see that the covariance of the
effective post-processing noise under the condition of imperfect
$\mathcal{R}$-$\mathcal{D}$ CSI consists of an additional term,
which is related to transmit power $P$, $\mathcal{R}$-$\mathcal{D}$
CSI error gain $e$, and the CSIs of $\mathcal{S}-\mathcal{R}$ and
$\mathcal{R}-\mathcal{D}$. We call this term as
channel-error-generated noise power (CEG-noise power).


\subsection{Instantaneous SNR of MF-ZF Beamforming}
In the presence of imperfect CSI of $\mathcal{R}$-$\mathcal{D}$, the ZF beamforming at the
$k$-th relay is
\begin{equation}
{\widehat{\bf{F}}}_k=\widehat{\bf{G}}_k^{\dag} {\bf{H}}_k^H,
\end{equation}
where $\widehat{\bf{G}}_k^{\dag}=\widehat{\bf{G}}_k^H {(\widehat{\bf{G}}_k \widehat{\bf{G}}_k^H)}^{-1}$ is the pseudo-inverse of the matrix $\widehat{\bf{G}}_k$.
The signal vector received by the destination is
\begin{equation}
{\bf{y}}=\sum_{k = 1}^K \widehat{\rho_k} {\bf{G}}_k {\widehat{\bf{G}}}_k^{\dag} {\bf{H}}_k^H \left( {\bf{H}} {\bf{s}} + {\bf{n}}_k \right) + {\bf{n}}_{d},
\label{eq6}
\end{equation}
Since $e \ll 1$, the pseudo-inverse of matrix ${\bf{G}}_k$ can be approximated using the Taylor expansion as
\begin{equation}\label{eq5}
\widehat{\bf{G}}_k^{\dag} \cong ({\bf{I}}_N-e {\bf{G}}_k^{\dag}
{\bf{\Omega}}_k) {\bf{G}}_k^{\dag},
\end{equation}
where ${\bf{G}}_k^{\dag}={\bf{G}}_k^{H}({\bf{G}}_k{\bf{G}}_k^{H})^{-1}$ is the pseudo-inverse of ${\bf{G}}_k$. The relay power control factor can be approximated as
\begin{equation}\label{6}
\begin{split}
&\widehat{\rho _k}  = \left ( Q \biggl / \mathrm{tr} \left \{ {\widehat{\bf{G}}}_k^{\dag} {\bf{H}}_k^H \left( {\frac{P}{M}{\bf{H}}_k {\bf{H}}_k^H  + \sigma _1^2 {\bf{I}}_N } \right){\bf{H}}_k \left({\widehat{\bf{G}}}_k^{\dag}\right)^H \right \} \right ) ^{\frac{1}{2}}
\\&\cong \rho_k^{\frac{1}{2}}\left(1+\frac{e{v}_k}{2{u}_k}\right),
\end{split}
\end{equation}
where
\begin{equation}
{u}_k=\mathrm{tr}\left ({\bf{H}}_k^H \left( {\frac{P}{M}{\bf{H}}_k {\bf{H}}_k^H  + \sigma _1^2 {\bf{I}}_N } \right){\bf{H}}_k \left({\bf{G}}_k{\bf{G}}_k^H\right)^{-1}\right ),
\end{equation}
\begin{multline}
v_k=\mathrm{tr}\biggl ( {\bf{H}}_k^H \left( {\frac{P}{M}{\bf{H}}_k {\bf{H}}_k^H  + \sigma _1^2 {\bf{I}}_N } \right){\bf{H}}_k \left ({\bf{G}}_k{\bf{G}}_k^H\right)^{-1}\\\left( {\bf{G}}_k{\bf{\Omega}}_k^H +{\bf{\Omega}}_k {\bf{G}}_k^H\right)\left({\bf{G}}_k{{\bf{G}}}_k^H\right )^{-1}\biggl ),
\end{multline}
and
\begin{equation}
\rho_k=\left(\frac{Q}{{u}_k}\right)^{\frac{1}{2}}.
\end{equation}
By substituting (\ref{eq5}) and (\ref{6}) into (\ref{eq6}) and omitting some relatively small terms, the received signal at the destination can be further written as
\begin{equation}
\begin{split}\label{eq7}
{\bf{y}}&\cong \sum_{k=1}^K \rho_k {{\bf{H}}_k}^H {{\bf{H}}_k} {\bf{s}}+ \sum_{k=1}^K \rho_k {{\bf{H}}_k}^H {\bf{n}}_k\\
&-e \sum_{k=1}^K \rho_k {\bf{\Omega}}_k {{\bf{G}}_k}^{\dag} {{\bf{H}}_k}^H {{\bf{H}}_k} {\bf{s}} +{\bf{n}}_{d},
\end{split}
\end{equation}
where we used Lemma~1 and Lemma~2.
Just as the case of MF beamforming, the third
term in the right-hand side of (\ref{eq7})  are caused by the $\mathcal{R}$-$\mathcal{D}$
CSI error. The effective post-processing noise is the last three
terms in the right-hand side of (\ref{eq7}), i.e.,
\begin{equation}
\widehat{\bf{n}} = \sum_{k=1}^K \rho_k {{\bf{H}}_k}^H {\bf{n}}_k
-e \sum_{k=1}^K \rho_k {\bf{\Omega}}_k {{\bf{G}}_k}^{\dag} {{\bf{H}}_k}^H {{\bf{H}}_k} {\bf{s}}
+{\bf{n}}_{d}.
\end{equation}
Denote
\begin{equation}
{\bf{H}}_{{\mathcal {S}\mathcal {D}},MF-ZF}=\sum_{k=1}^K \rho_k {\bf{H}}_k^H {\bf{H}}_k.
\end{equation}
as the effective transmitting matrix. Using QRD as ${\bf{H}}_{{\mathcal {S}\mathcal {D}},MF-ZF}={\bf{Q}}_{MF-ZF}{\bf{R}}_{MF-ZF}$, we have
\begin{equation}
{\bf{y}}={\bf{H}}_{{\mathcal {S}\mathcal {D}},MF-ZF} {\bf{x}}+\widehat{\bf{n}}=
{\bf{Q}}_{MF-ZF}{\bf{R}}_{MF-ZF}\bf{x}+\widehat{\bf{n}}.
\end{equation}
The covariance matrix of the effective post-processing noise is
\begin{equation}
\begin{split}
&\mathrm{E} \biggl [\mathbf{Q}_{MF-ZF}^H{\widehat{\bf{n}}} (\mathbf{Q}_{MF-ZF}^H{\widehat{\bf{n}}})^H \biggl ]\\
&=e^2 {\sigma}_1^2 \sum_{k=1}^K \rho_k^2 \mathrm{tr} \left ( \left( {{\bf{H}}_k}^H {{\bf{H}}_k}\right)^2 \left({\bf{G}}_k {\bf{G}}_k^H\right)^{-1} \right ){\bf{I}}_M\\&+
{\sigma}_1^2 \sum_{k=1}^K \rho_k^2 \mathbf{Q}_{MF-ZF}^H{{\bf{H}}_k}^H {{\bf{H}}_k}\mathbf{Q}_{MF-ZF} + {\sigma}_2^2 {\bf{I}}_M,
\label{eq15}
\end{split}
\end{equation}
where we used Lemma~3 in the derivation.  The
above formulas result in the post-processing SNR per symbol of the
$m$th stream as (\ref{eq10}) at the top of next page.
\begin{figure*}[!t]
\begin{equation}
\gamma_m^{MF-ZF}=\frac {\frac{P}{M} (\mathbf{R}_{MF-ZF})_{m,m}^2} {\underbrace{e^2 \frac{P}{M} \sum_{k=1}^K \rho_k^2 \mathrm{tr} \left ( ( {{\bf{H}}_k}^H {{\bf{H}}_k})^2 ({\bf{G}}_k {\bf{G}}_k^H)^{-1} \right )}_{channel-error-generated~noise~power} + \sigma_1^2 \sum_{k=1}^K
\rho_k^2 \|\left(\mathbf{Q}_{MF-ZF}^H{\bf{H}}_k^H \right)_m\|^2 + \sigma_2^2}.
\label{eq10}
\end{equation}
\hrulefill
\end{figure*}

Once again, the imperfect $\mathcal{R}$-$\mathcal{D}$ CSI generates the CEG-noise power term in the post-processing SNR.
\subsection{Instantaneous SNR of MF-RZF Beamforming}
To simplify the analysis, consider $\alpha_k=\alpha$.
In the presence of imperfect CSI of $\mathcal{R}$-$\mathcal{D}$ at each relay, the MF-RZF
beamforming at the $k$-th relay is
\begin{equation}\label{rzf1}
{\widehat{\bf{F}}}_k=\widehat{{\bf{G}}}_k^H (\widehat{{\bf{G}}}_k \widehat{{\bf{G}}}_k^H
+ \alpha \textbf{I}_M)^{-1} {\bf{H}}_k^H.
\end{equation}
When $e \ll 1$, using Taylor expansion, we have
\begin{equation}\label{rzf2}
\begin{split}
&\widehat{{\bf{G}}}_k^H (\widehat{{\bf{G}}}_k \widehat{{\bf{G}}}_k^H
+ \alpha \textbf{I}_M)^{-1} \\&\cong ({\bf{G}}_k^{\dag} -e
{\bf{G}}_k^{\dag}{\bf{\Omega}}_k{\bf{G}}_k^{\dag})(\textbf{I}_M-\alpha
{\bf{G}}_k^{\alpha}+e\alpha {\bf{G}}_k^{\alpha} {\bf{G}}_k^e
{\bf{G}}_k^{\alpha})\\&={\bf{G}}_k^H ({\bf{G}}_k {\bf{G}}_k^H
+ \alpha \textbf{I}_M)^{-1}-e {\bf{G}}_k^E,
\end{split}
\end{equation}
where
\begin{equation}\label{rzf3}
{\bf{G}}_k^{\alpha}=({\bf{G}}_k {\bf{G}}_k^H + \alpha
\textbf{I}_M)^{-1}
\quad
{\bf{G}}_k^E={\bf{G}}_k^{\dag}
{\bf{\Omega}}_k {\bf{G}}_k^{H} \mathbf{G}_k^\alpha.
\end{equation}
If $\alpha$ increases,
the norm of entries of ${\bf{G}}_k^E$ will decrease. So the impact
of imperfect CSI will be reduced. In addition, the power control factor $\widehat{\rho_k}$ will increase, resulting in a reduced unnormalized transmit power at relays.
But if
$\alpha$ becomes too large, interference from other antennas at
the destination will be considerable. In this paper, we try to  obtain the optimal $\alpha$ to maximize the SINR on each data stream.
Let ${\bf{G}}_k^e={\bf{G}}_k {\bf{\Omega}}_k^H+{\bf{\Omega}}_k {\bf{G}}_k^H$.
The relay power control factor can be approximated as
\begin{equation}\label{rzf4}
\widehat{\rho _k}
\cong \rho_k^{\frac{1}{2}}\left(1+\frac{e{v}_k}{2{u}_k}\right),
\end{equation}
where
\begin{equation}
{u}_k=\mathrm{tr}\left ({\bf{H}}_k^H \left( {\frac{P}{M}{\bf{H}}_k {\bf{H}}_k^H  + \sigma _1^2 {\bf{I}}_N } \right){\bf{H}}_k \left({\bf{G}}_k^\alpha-\alpha\left({\bf{G}}_k^\alpha\right)^2\right)\right ),
\end{equation}
\begin{multline}
v_k=\mathrm{tr}\biggl ( {\bf{H}}_k^H \left( {\frac{P}{M}{\bf{H}}_k {\bf{H}}_k^H  + \sigma _1^2 {\bf{I}}_N } \right){\bf{H}}_k {\bf{G}}_k^\alpha\\\left ({\bf{G}}_k^e{\bf{G}}_k^\alpha{\bf{G}}_k{\bf{G}}_k^H+{\bf{G}}_k{\bf{G}}_k^H{\bf{G}}_k^\alpha{\bf{G}}_k^e-{\bf{G}}_k^e\right){\bf{G}}_k^\alpha\biggl),
\end{multline}
and
\begin{equation}
\rho_k=\left(\frac{Q}{{u}_k}\right)^{\frac{1}{2}}.
\end{equation}
Using Lemma~1 and Lemma~2, substituting (\ref{rzf1}), (\ref{rzf2}), (\ref{rzf3}) and (\ref{rzf4}) into (\ref{rzf5}) and omitting some relatively small terms, the received signal at destination can be expanded as
\begin{equation}\label{eq12}
\begin{split}
&{\bf{y}}
=\sum_{k=1}^K \rho_k (\textbf{I}_M- \alpha{\bf{G}}_k^{\alpha}){{\bf{H}}_k}^H {{\bf{H}}_k} {\bf{s}}
\\&+ \sum_{k=1}^K \rho_k (\textbf{I}_M-\alpha {\bf{G}}_k^{\alpha}){{\bf{H}}_k}^H {\bf{n}}_k \\&
+e \sum_{k=1}^K \rho_k {\bf{\Omega}}_k
{\bf{G}}_k^{H} \mathbf{G}_k^\alpha {{\bf{H}}_k}^H {{\bf{H}}_k} {\bf{s}}
+{\bf{n}}_{d}.
\end{split}
\end{equation}
Denote
\begin{equation}
{\bf{H}}_{{\mathcal {S}\mathcal {D}},MF-RZF}=\sum_{k=1}^K \rho_k
(\textbf{I}_M- \alpha{\bf{G}}_k^{\alpha}){{\bf{H}}_k}^H {{\bf{H}}_k}.
\end{equation}
Then (\ref{eq12}) becomes
\begin{equation}
{\bf{y}}={\bf{H}}_{{\mathcal {S}\mathcal {D}},MF-RZF} {\bf{x}}+\widehat{\bf{n}}\triangleq
{\bf{Q}}_{{\mathcal {S}\mathcal {D}},MF-RZF}{\bf{R}}_{{\mathcal {S}\mathcal {D}},MF-RZF} \bf{x}+\widehat{\bf{n}},
\end{equation}
where the effective post-processing noise
\begin{equation}
\widehat{\bf{n}}= e \sum_{k=1}^K
\rho_k {\bf{\Omega}}_k
{\bf{G}}_k^{H} \mathbf{G}_k^\alpha {{\bf{H}}_k}^H {{\bf{H}}_k}
{\bf{s}}+\sum_{k=1}^K \rho_k(\textbf{I}_M-\alpha
{\bf{G}}_k^{\alpha}){{\bf{H}}_k}^H {\bf{n}}_k +{\bf{n}}_{d}. \label{eq9}
\end{equation}
The covariance of effective post-processing
noise is
\begin{equation}
\begin{split}
&\mathrm{E} \biggl [\mathbf{Q}_{MF-RZF}^H\widehat{\bf{n}} ({\mathbf{Q}_{MF-RZF}^H\widehat{\bf{n}}})^H \biggl
]\\&=e^2 \frac{P}{M}
\sum_{k=1}^K \rho_k^2 \mathrm{tr}\left(\left({{\bf{H}}_k}^H {{\bf{H}}_k}\right)^2 \mathbf{{G}}_k^\alpha\left(\mathbf{I}_M-\alpha\mathbf{{G}}_k^\alpha\right) \right) \\&+{\sigma}_1^2 \sum_{k=1}^K \rho_k^2 \mathbf{Q}_{MF-RZF}^H\left(\textbf{I}_M-\alpha
{\bf{G}}_k^{\alpha}\right){{\bf{H}}_k}^H\\&
{{\bf{H}}_k}\left(\textbf{I}_M-\alpha{\bf{G}}_k^{\alpha}\right)^H \mathbf{Q}_{MF-RZF}+ {\sigma}_2^2
{\bf{I}}_M,
\end{split}
\end{equation}
where we used Lemma~3.
Then the post-processing SNR per
symbol of the $m$th stream is calculated as (\ref{eq11}) at the top of next page.
\begin{figure*}[ht]
\begin{equation}
\gamma_m^{MF-RZF}=\frac {\frac{P}{M} (\mathbf{R}_{MF-RZF})_{m,m}^2} {\underbrace{e^2 \frac{P}{M}
\sum_{k=1}^K \rho_k^2 \mathrm{tr}\left(({{\bf{H}}_k}^H {{\bf{H}}_k})^2\mathbf{{G}}_k^\alpha(\mathbf{I}_M-\alpha\mathbf{{G}}_k^\alpha)\right)}
_{channel-error-generated~noise~power}
+ \sigma_1^2 \sum_{k=1}^K \rho_k^2 \|\left(\mathbf{Q}_{MF-RZF}^H(\textbf{I}_M-\alpha
{\bf{G}}_k^{\alpha}){\bf{H}}_k^H\right)_m \|^2 + \sigma_2^2}. \label{eq11}
\end{equation}
\hrulefill
\end{figure*}

\subsection{Ergodic capacity}
The ergodic capacity is derived by summing up all the data rates on each antenna link, i.e.,
\begin{equation}
C = \mathrm{E}_{\left\{ {{\bf{H}}_k ,{\bf{G}}_k } \right\}_{k = 1}^K } \left\{ {\frac{1}{2}\sum\limits_{m = 1}^M {\log _2 \left( {1 + \gamma_m } \right)} } \right\}.
\end{equation}
We can see, from the instantaneous SNRs in (\ref{eq4}),
(\ref{eq10}) and (\ref{eq11}),  that a ceiling effect~\cite{27} can be
expected when PNR ($P/\sigma_1^2$) and QNR ($Q/\sigma_2^2$) $
\rightarrow \infty$. This is because that $\gamma_m$ can not tend to
infinity due to the CEG-noise power term in the denominator of
$\gamma_m$, which will also increase as the effective power does. Simulations will confirm the ceiling effect.


\section{Asymptotic capacities and the optimized MF-RZF}
To further investigate the capacity performance under imperfect CSI
of $\mathcal{R}$-$\mathcal{D}$ channel at relays, we derive
asymptotic capacities for large $K$. To simplify the analysis, we
assume a fixed power control factor for all relays in this section.
Simulation results will validate this assumption.  The power control
factor is chosen as the average one, i.e.,
\begin{equation}
\rho_k=\left(Q\biggl/\mathrm{E}\left[\mathrm{tr}\left(\mathbf{F}_k(\frac{P}{M}\mathbf{H}_k\mathbf{H}_k^H+\sigma_1^2\mathbf{I}_N)\mathbf{F}_k^H\right)\right]\right)^{\frac{1}{2}}. \end{equation}
Then
\begin{equation}
\rho_{MF}=\left(Q\biggl/\left(\left(P\left(M+N\right)+M\right)N^2\right)\right)^{\frac{1}{2}},
\end{equation}
\begin{equation}
\rho_{MF-ZF}=\left(Q\left(N-M\right)\biggl/\left(\left(P\left(M+N\right)+M\right)\right)\right)^{\frac{1}{2}},
\end{equation}
and
\begin{equation}\label{8}
\begin{split}
&\rho_{MF-RZF}\\&=\left(Q\biggl/\left(\left(P\left(M+N\right)N+MN\right)\right)\mathrm{E}\left[\frac{\lambda}{\left(\lambda+\alpha\right)^2}\right]\right)^{\frac{1}{2}}.
\end{split}
\end{equation}
In (\ref{8}), we used the decomposition $\mathbf{G}\mathbf{G}^H=\mathbf{U}\mathbf{\Lambda}\mathbf{U}^H$, where       $\mathbf{\Lambda}=\mathrm{diag}\{\lambda_1,\ldots,\lambda_M\}$ and $\mathbf{U}$ are independent to each other~\cite{25}.
For the case of large $K$, using Law of Large Number, we have
\begin{equation}
{\bf{H}}_{{\mathcal {S}\mathcal {D}},MF}\overset{w.p.}{\longrightarrow}K\left(\mathrm{E}\left[\mathbf{G}_k\mathbf{G}_k^H\mathbf{H}_k^H\mathbf{H}_k\right]\right)
=KN^2\mathbf{I}_M,
\end{equation}
\begin{equation}
{\bf{H}}_{{\mathcal {S}\mathcal {D}},MF-ZF}\overset{w.p.}{\longrightarrow}K\left(\mathrm{E}\left[\mathbf{H}_k^H\mathbf{H}_k\right]\right)=KN\mathbf{I}_M,
\end{equation}
and
\begin{multline}
{\bf{H}}_{{\mathcal {S}\mathcal {D}},MF-RZF}\overset{w.p.}{\longrightarrow}K\left(\mathrm{E}\left[(\mathbf{I}_M-\alpha\mathbf{G}_k^\alpha)\mathbf{H}_k^H\mathbf{H}_k\right]\right)
\\=KN\mathrm{E}\left[\mathbf{U}\mathrm{diag}\left\{\frac{\lambda_1}{\lambda_1+\alpha},\ldots,\frac{\lambda_M}{\lambda_M+\alpha}\right\}\mathbf{U}^H\right].
\end{multline}
Note that
\begin{multline}
\mathrm{E}\left[\left(\mathbf{U}\mathrm{diag}\left\{\frac{\lambda_1}{\lambda_1+\alpha},\ldots,\frac{\lambda_M}{\lambda_M+\alpha}\right\}\mathbf{U}^H\right)_{m,n}\right]\\
=\mathrm{E}\left[\Sigma_{j=1}^M(\mathbf{U})_{m,j}\frac{\lambda_j}{\lambda_j+\alpha}(\mathbf{U})_{n,j}^*\right]
=\mathrm{E}\left[\frac{\lambda}{\lambda+\alpha}\right]\delta\left[m-n\right].
\end{multline}
Since the asymptotic effective channel matrices are all diagonal, we have
$\mathbf{Q}\overset{w.p.}{\longrightarrow}\mathbf{I}_M$ for large $K$. Since $(\mathbf{G}_k\mathbf{G}_k^H)^{-1}$ is a complex inverse Wishart distribution with $N$ degrees of freedom~\cite{22}. We have   $\mathrm{E}\left[(\mathbf{G}_k\mathbf{G}_k^H)^{-1}\right]=\frac{M}{N-M}$~\cite{28} and   $\mathrm{E}\left[(\mathbf{H}_k^H\mathbf{H}_k)^2\right]=(MN+N^2)\mathbf{I}_M$~\cite{30}.
Using Law of Large Number, for large $K$, we have the asymptotic capacities in (\ref{7}), (\ref{2}), and (\ref{1}) at the top of next page.
\begin{figure*}[t]
\begin{multline}\label{7}
C_{MF}\overset{w.p.}{\longrightarrow}\\
\frac{M}{2}\log_2\left(1+\frac{\frac{P}{M}\left(KN^2\right)^2}{e^2\frac{PK}{M}\mathrm{E}\left[\mathrm{tr}\left(\left(\mathbf{H}_k^H\mathbf{H}_k\right)^2\right)\left(\mathbf{G}_k\mathbf{G}_k^H\right)_{m,m}\right]+\sigma_1^2K\mathrm{E}\left[\left(\mathbf{H}_k^H\mathbf{H}_k(\mathbf{G}_k\mathbf{G}_k^H)^2\right)_{m,m}\right]+\sigma_2^2\rho_{MF}^{-2}}\right)
\\=\frac{M}{2}\log_2\left(1+\frac{PK^2N}{(e^2P+\sigma_1^2)KM\frac{M+N}{N}+\frac{PM(M+N)+M^2}{QN}\sigma_2^2}\right),
\end{multline}
\begin{multline}\label{2}
C_{MF-ZF}\overset{w.p.}{\longrightarrow}\\
\frac{M}{2}\log_2\left(1+\frac{\frac{P}{M}\left(KN\right)^2}{e^2\frac{PK}{M}\mathrm{E}\left[\mathrm{tr}\left(\left(\mathbf{H}_k^H\mathbf{H}_k\right)^2\left(\mathbf{G}_k\mathbf{G}_k^H\right)^{-1}\right)\right]+\sigma_1^2K\mathrm{E}\left[\left(\mathbf{H}_k^H\mathbf{H}_k\right)_{m,m}\right]+\sigma_2^2\rho_{MF-ZF}^{-2}}\right)
\\=\frac{M}{2}\log_2\left(1+\frac{PK^2N}{e^2PK\frac{M(M+N)}{N-M}+KM\sigma_1^2+\frac{PM(M+N)+M^2}{Q(N-M)}\sigma_2^2}\right),
\end{multline}
and
\begin{multline}\label{1}
C_{MF-RZF}\overset{w.p.}{\longrightarrow}
\frac{M}{2}\log_2\left(1+\right.\\\left.\frac{\frac{P}{M}\left(KN\mathrm{E}\left[\frac{\lambda}{\lambda+\alpha}\right]\right)^2}{e^2\frac{PK}{M}\mathrm{E}\left[\mathrm{tr}\left(\left(\mathbf{H}_k^H\mathbf{H}_k\right)^2\mathbf{G}_k^\alpha(\mathbf{I}_M-\alpha\mathbf{G}_k^\alpha)\right)\right]+\sigma_1^2K\mathrm{E}\left[\left(\mathbf{H}_k^H\mathbf{H}_k\left((\mathbf{I}_M-\alpha\mathbf{G}_k^\alpha)\right)^2\right)_{m,m}\right]+\sigma_2^2\rho_{MF-RZF}^{-2}}\right)
\\=\frac{M}{2}\log_2\left(1+\frac{PK^2N\left(\mathrm{E}[\frac{\lambda}{\lambda+\alpha}]\right)^2}{e^2PKM(M+N)\mathrm{E}[\frac{\lambda}{(\lambda+\alpha)^2}]+KM\mathrm{E}[\frac{\lambda^2}{(\lambda+\alpha)^2}]\sigma_1^2+\frac{PM(M+N)+M^2}{Q}\mathrm{E}[\frac{\lambda}{(\lambda+\alpha)^2}]\sigma_2^2}\right).
\end{multline}
\hrulefill
\end{figure*}
From the asymptotic capacities, we see that they satisfies the
scaling law in~\cite{4}, i.e., $C=(M/2) \log(K)+O(1)$ for large $K$.
Obviously the capacities will increase as $K$ increases, and derease
when $e$ increases. In addition, the CEG-noise power in $C_{MF-ZF}$
is the largest among those in the three asymptotic capacities,
resulting in a worse capacity performance of MF-ZF, which will be
confirmed by simulations.

From (\ref{7}), (\ref{2}), and (\ref{1}) it is observed that when
QNR (${Q}/{\sigma_2^2}$) grows to infinite for a fixed PNR
(${P}/{\sigma_1^2}$), the capacities of the three beamformers will
reach a limit, which demonstrates the "ceiling effect" that will be
confirmed by simulations. When PNR (=QNR) grows to infinite, the
capacities will grow linearly with PNR (dB) for perfect
$\mathcal{R}$-$\mathcal{D}$ CSI, or reach a limit for imperfect
$\mathcal{R}$-$\mathcal{D}$ CSI, which also demonstrates the
"ceiling effect" that will be confirmed by simulations.

Consider the $\mathcal{R}$-$\mathcal{D}$ CSI error varying with the
number of relays ($K$). Let $e=\sigma_q+K\sigma_d$, where $\sigma_q$
denotes the quantization error due to limited bits of feedback, and
$\sigma_d$ denotes the error weight caused by feedback delay in each
relay. Substituting such $e$ into the asymptotic capacities, it will
generate terms $O(\frac{1}{K}) + O(K)$ in the denominators of the
asymptotic SNR, which implies that the denominator will reach a
minimum value at some $K$. Therefore, there exists an optimal number
of relays to maximize the asymptotic capacities, which will be
confirmed by simulations.

Note that (\ref{2}) holds when $N>M$, because
$\mathrm{E}\left[\mathrm{tr}(\mathbf{G}\mathbf{G}^H)^{-1}\right]=\infty$  and
$\mathrm{E}\left[\rho_{ZF}^{-2}\right]=\infty$ for $M=N$~\cite{25}.
A bad capacity performance of MF-ZF can be expected due to the
infinite expectation of CEG-noise power  when $M=N$, which will be
confirmed by simulations. In order to optimize the regularizing
factor $\alpha$,  we shall use the following approximations.
\begin{equation}
\mathrm{E}\left[\frac{\lambda}{\lambda+\alpha}\right]=
\frac{1}{KM}\Sigma_{k=1}^K\Sigma_{m=1}^M\frac{\lambda_{m,k}}{\lambda_{m,k}+\alpha},\nonumber
\end{equation}
\begin{equation}
\mathrm{E}\left[\frac{\lambda}{\left(\lambda+\alpha\right)^2}\right]=
\frac{1}{KM}\Sigma_{k=1}^K\Sigma_{m=1}^M\frac{\lambda_{m,k}}{(\lambda_{m,k}+\alpha)^2}, \nonumber
\end{equation}
and
\begin{equation}
\mathrm{E}\left[\frac{\lambda^2}{\left(\lambda+\alpha\right)^2}\right]=
 \frac{1}{KM}\Sigma_{k=1}^K\Sigma_{m=1}^M\frac{\lambda_{m,k}^2}{(\lambda_{m,k}+\alpha)^2}, \nonumber
\end{equation}
 where $\lambda_{m,k}$ denotes the $m$th eigenvalue of $\mathbf{G}_k\mathbf{G}_k^H$. Take derivative of (\ref{1}) with respect to $\alpha$, and manipulate as~\cite{25}. Then we get the optimal regularizing factor as
\begin{equation}
\alpha_{\mathrm{opt}}=\frac{\frac{P(M+N)+M}{Q}\sigma_2^2+e^2PK(M+N)}{K\sigma_1^2}.
\end{equation}

\section{Simulation Results}
In this section, numerical results are carried out to validate what we draw from the analysis in the previous sections for the three relay beamforming schemes. The advantage of the optimized MF-RZF beamformer is also demonstrated.
\begin{figure}[!t]
\centering
\includegraphics[width=3.5in]{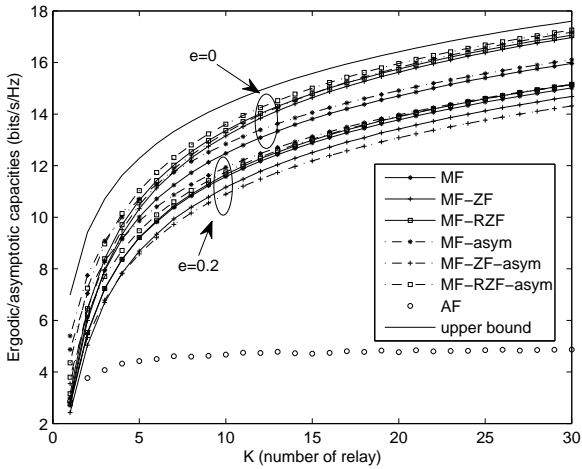}
\caption{Ergodic/asymptotic capacity vs. $K$ (number of relays) ($M=4$, $N=6$,
PNR=QNR=$10dB$). In this figure, MF-RZF is fixed with $\alpha=0.5$.}
\end{figure}

\subsection{Capacity Versus Number of Relays}
In Fig.~2, we compare the ergodic/asymptotic capacities of the three
beamforming schemes at relays.  We consider perfect CSIs of
$\mathcal{S}$-$\mathcal{R}$ channel at relays and imperfect CSI of
$\mathcal{R}$-$\mathcal{D}$ at relays. The solid curves are ergodic
capacities, and the dashed curves are the asymptotic capacities.
Capacities versus $K$ is demonstrated when $M=4$, $N=6$ and
PNR=QNR=$10dB$. When CSIs of $\mathcal{R}$-$\mathcal{D}$ are perfect
at relays ($e=0$), MF-RZF is the best choice.
When CSI of $\mathcal{R}$-$\mathcal{D}$ is imperfect at relays,
MF-ZF has apparently the worst performance. MF and MF-RZF have
almost the same performance. The  poor performance of MF-ZF under
imperfect CSI comes from the inverse Wishart distribution term in
its CEG-noise power, which can be clearly seen from the asymptotic
capacity (\ref{2}). Note that the asymptotic capacity of MF-ZF is
not tight enough to its ergodic capacity, since its power control
factor has an inverse Wishart distribution in the denominator. Thus,
the dynamic power control factor of MF-ZF has a much larger variance
than those of MF and MF-RZF. Since we use an average power control
factor instead of a dynamic power control factor to derive the
asymptotic capacity, it results in a gap between the two types of
capacities. We find that  the ergodic capacities still satisfy the
scaling law in~\cite{4}, i.e., $C=(M/2) \log(K)+O(1)$ for large $K$
in the presence of $\mathcal{R}$-$\mathcal{D}$ CSI error. This is
also consistent with the  asymptotic capacities for the three
beamformers. Note that AF keeps as the worst relaying strategy,
which  cannot utilize the distributed array gain.
\begin{figure}[!t]
\centering
\includegraphics[width=3.5in]{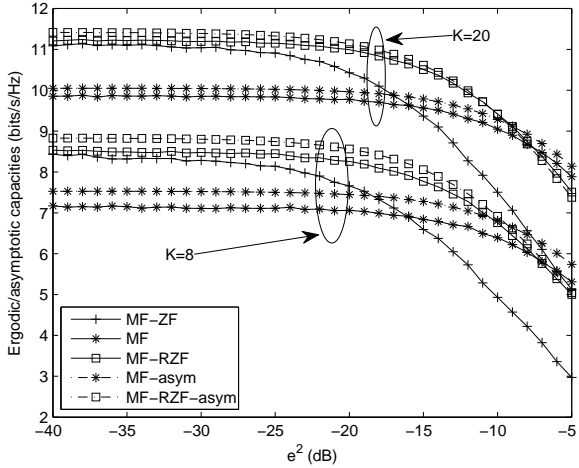}
\caption{Capacity vs. the power of $\mathcal{R}$-$\mathcal{D}$ CSI error ($N=M=4$, PNR=$5dB$,
QNR=$20dB$). Asymptotic capacities match with ergodic capacities very well for $K=20$. Since $\mathrm{E}\left[\mathrm{tr}(\mathbf{G}\mathbf{G}^H)^{-1}\right]=\infty$ for $M=N$, asymptotic capacity for MF-ZF does not exist. MF-RZF is fixed with $\alpha=0.5$.}
\end{figure}
\begin{figure}[!t]
\centering
\includegraphics[width=3.5in]{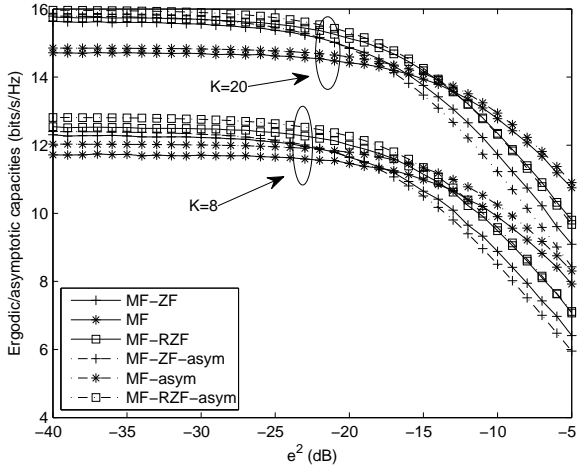}
\caption{Capacity vs. the power of $\mathcal{R}$-$\mathcal{D}$ CSI
error ($M=4$, $N=6$, PNR=$10dB$, QNR=$10dB$). MF-RZF is fixed with $\alpha=0.5$.}
\end{figure}
\begin{figure}[!t]
\centering
\includegraphics[width=3.5in]{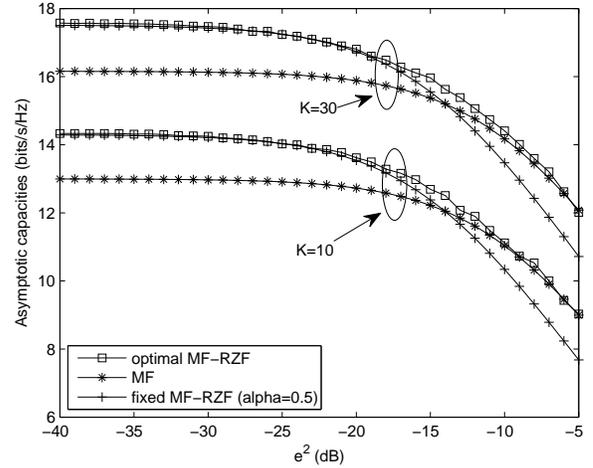}
\caption{Capacity vs. the power of $\mathcal{R}$-$\mathcal{D}$ CSI
error for MF and different MF-RZF ($M=4$, $N=6$, PNR=$10dB$,
QNR=$20dB$). The optimized MF-RZF outperforms MF for any power  of
$\mathcal{R}$-$\mathcal{D}$ CSI error.}
\end{figure}
\subsection{Capacity Versus Power of CSI Error}
In Fig.~3, we show the ergodic/asymptotic capacities versus power of
$\mathcal{R}$-$\mathcal{D}$ CSI error for $M=N=4$. The ergodic
capacity of MF-ZF drops quickly when CSI error occurs, which
validates what we observed in (\ref{2}).  We see that the asymptotic
capacities of MF and MF-RZF match well with their ergodic capacities
for  $K=20$, which also shows that static power allocation has
almost the same performance as dynamic power allocation for large
$K$. Similarly, Fig.~4 is the case  for $M=4$ and $N=6$.  It is
observed that when $N>M$, MF-ZF and MF-RZF obviously outperforms MF
with perfect $\mathcal{R}$-$\mathcal{D}$ CSI at relays, while their
performance will be upside down in the presence of
$\mathcal{R}$-$\mathcal{D}$ CSI error. Since ergodic capacities
match well with the asymptotic capacities, in the rest figures, we
only plot the asymptotic capacities to demonstrate the advantage of
optimized MF-RZF. Fig.~5 shows that the optimized MF-RZF has
consistently the best performance for any powers of  CSI error.
\begin{figure}[!t]
\centering
\includegraphics[width=3.5in]{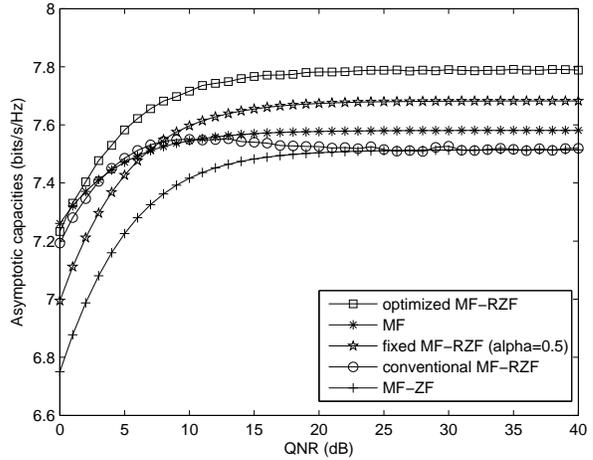}
\caption{Capacity vs. QNR ($M=2$, $N=4$, PNR=$10dB$,
$e=0.1$). Optimized MF-RZF outperforms the MF-RZF with fixed $\alpha$, MF, the conventional MF-RZF and MF-ZF in the presence of  $\mathcal{R}$-$\mathcal{D}$ CSI error. All schemes experience the ceiling effect.}
\end{figure}
\begin{figure}[!t]
\centering
\includegraphics[width=3.5in]{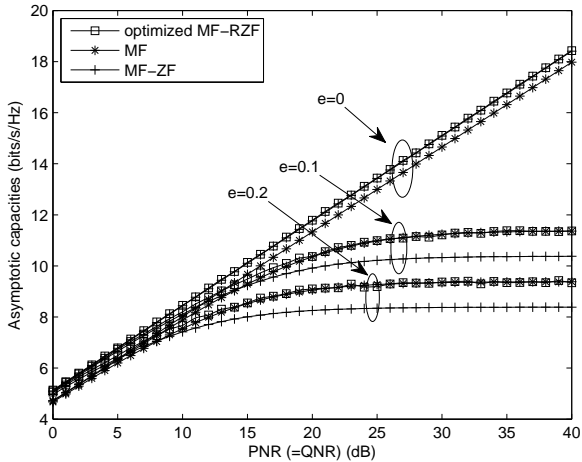}
\caption{Capacity vs. PNR (=QNR ) for
$e=0, 0.1, 0.2$. All schemes experience the ceiling effect in the presence of $\mathcal{R}$-$\mathcal{D}$ CSI error.}
\end{figure}
\begin{figure}[!t]
\centering
\includegraphics[width=3.5in]{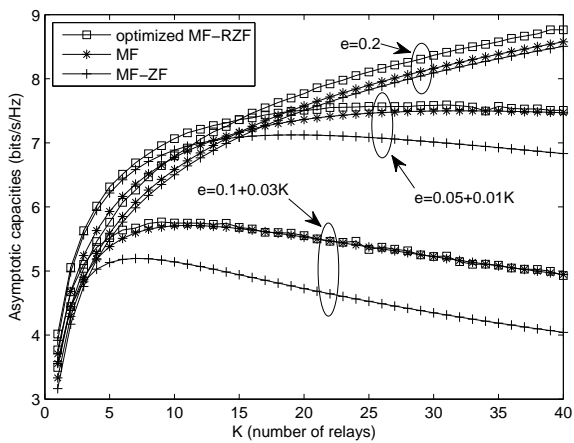}
\caption{Capacity vs. $K$ for $e=0.2$ and $e=\sigma_q+K\sigma_d$.
Each scheme has an optimal number of relays to maximize the capacity
in the presence of CSI error. The optimal point of $K$ is obvious if
$\sigma_d$ is big.}
\end{figure}
\subsection{Capacity Versus PNR and QNR}
A more apparent superiority of optimized MF-RZF can be observed in
Fig.~6, where we fix the SNR of $\mathcal{S}$-$\mathcal{R}$ channel
(PNR) and increase the SNR of $\mathcal{R}$-$\mathcal{D}$ channel
(QNR). Note that the capacity in this scenario is limited by the
$\mathcal{S}$-$\mathcal{R}$ channels~\cite{4}, called ``ceiling
phenomenon'' in~\cite{27}. We also include the conventional
optimized RZF ($\alpha=M\sigma_2^2/Q=M/QNR$) for perfect
CSI~\cite{25} in Fig.~6, and refer to it as the conventional MF-RZF.
We find that the optimized MF-RZF outperforms MF, MF-ZF, the
conventional MF-RZF and MF-RZF with fixed $\alpha$, and has the
highest ceiling for QNR $>1$dB.  When QNR increases, the $\alpha$ in
the conventional MF-RZF approaches to zeros. So MF-RZF will converge
to MF-ZF beamformer. In Fig.~7, we increase PNR and QNR
simultaneously. When $\mathcal{R}$-$\mathcal{D}$ CSI is perfect at
relays, the capacities of all the three beamformings grow linearly
with the PNR (=QNR) in dB. When $\mathcal{R}$-$\mathcal{D}$ CSI
error occurs, we see different capacity limits for different CSI
error powers. This is the "ceiling effect" discussed in Section
\uppercase \expandafter {\romannumeral 5}.

\subsection{Capacity Versus Relay Number for Dynamic CSI Error}

Fig.~8 shows the capacities versus the relay number $K$ when CSI
error $e=\sigma_q+K\sigma_d$. We also include the capacities versus
relay number $K$ with constant CSI error for comparison in Fig.~8.
It is observed that the capacity achieves maximum at some optimal
relay number in the presence of CSI error. When $\sigma_d$ is
bigger, the optimal $K$ is smaller.


\section{Conclusion}
In this paper, considering three efficient relay beamforming schemes
based on MF, MF-ZF and MF-RZF techniques, we investigate the effect
of imperfect $\mathcal{R}$-$\mathcal{D}$ CSI at relays to the capacity in a dual-hop MIMO multi-relay
network with Amplify-and-Forward (AF) relaying protocol. Supposing
Gaussian distributed $\mathcal{R}$-$\mathcal{D}$ CSI error and
perfect $\mathcal{S}$-$\mathcal{R}$ CSI at relays, we give the
ergodic capacities of  the three beamformers in terms of
instantaneous SNR. Using Law of Large Number, we derive the
asymptotic capacities of the three beamformers for large number of
relays, upon which, the optimized MF-RZF is derived. Simulation
results show that the asymptotic capacities match with the
respective ergodic capacities very well. Analysis and simulations
demonstrate that MF-ZF beamformer has the worst performance in the
presence of $\mathcal{R}$-$\mathcal{D}$ CSI error. The capacity of
MF-RZF drops faster than that of MF as the
$\mathcal{R}$-$\mathcal{D}$ CSI error increases, and has a small
performance loss compared to that of MF, while the optimized MF-RZF
has consistently the best performance for any power of
$\mathcal{R}$-$\mathcal{D}$ CSI error. Although we consider imperfect $\mathcal{R}$-$\mathcal{D}$ CSI caused by limited feedback and large delay, imperfect $\mathcal{S}$-$\mathcal{R}$ CSI is still a practical consideration when estimation error presents and we will consider this case as our future work.


\begin{thebibliography}{1}

\bibitem{1}
E. Telatar, ``Capacity of Multi-antenna Gaussian Channels,"
\emph{Euro. Trans. Telecomm.}, vol. 10, no. 6, pp. 585-596, Nov.
1999.

\bibitem{2}
A. Goldsmith, S. A. Jafar, N. Jindal, and S. Vishwanath, ``Capacity
limits of MIMO channels," \emph{IEEE J. Sel. Areas Commun.}, vol.
51, no. 6, pp. 684-702, Jun. 2003

\bibitem{3}
B. Wang, J. Zhang, and A. H. Madsen, ``On the Capacity of MIMO Relay
Channels," \emph{IEEE Trans. Inf. Theory}, vol. 51, no. 1, pp.
29-43, Jan. 2005.

\bibitem{5}
C. Chae, T. Tang, R. W. Heath, Jr., and S. Cho, ``MIMO Relaying With
Linear Processing for Multiuser Transmission in Fixed Relay
Networks," \emph{IEEE Trans. on Signal Process.}, vol. 56, no. 2,
pp. 727-738, Feb. 2008.

\bibitem{7}
R. H. Y. Louie, Y. Li, and B. Vucetic, ``Performance analysis of
beamforming in two hop amplify and forward relay networks," in
\emph{IEEE Intern. Conf. Commun. (ICC)}, pp. 4311-4315, May 2008.

\bibitem{31}
F. Tseng and W. Wu, ``Nonlinear transceiver designs in MIMO amplity-and forward relay systems,"\emph{IEEE Trans. Vehicular. Technology}, vol. 60, no. 2, pp. 528-538, February. 2011.
\bibitem{32}
F. Tseng and W. Wu, ``Linear MMSE transceiver design in amplity-and forward MIMO relay systems,"\emph{IEEE Trans. Vehicular. Technology}, vol. 59, no. 2, pp. 754-765, February. 2011.
\bibitem{33}
M. Peng, H. Liu, W. Wang and H. Chen, ``Cooperative network coding with MIMO transmission in wireless decode-and-forward relay networks,"\emph{IEEE Trans. Vehicular. Technology}, vol. 59, no. 7, pp. 3577-3588, Sep. 2010.

\bibitem{4}
H. Bolcskei, R. U. Nabar, O. Oyman, and A. J. Paulraj, ``Capacity
Scaling Laws in MIMO Relay Networks," \emph{IEEE Trans. on Wireless
Commun.}, vol. 5, no. 6, pp. 1433-1444, June 2006.



\bibitem{6}
W. Guan, H. Luo, and W. Chen, ``Linear Relaying Scheme for MIMO
Relay System With QoS Requirements," \emph{IEEE Signal Process.
Lett.}, vol. 15, pp. 697-700, 2008.




\bibitem{8}
H. Shi, T. Abe, T. Asai, and H. Yoshino, ``A relaying scheme using QR
decomposition with phase control for MIMO wireless networks," in
\emph{Proc. Int. Conf. on Commun.}, May 2005, vol. 4, pp. 2705-2711.

\bibitem{9}
H. Shi, T. Abe, T. Asai, and H. Yoshino, ``Relaying schemes using
matrix triangularization for MIMO wireless networks," \emph{IEEE
Trans. Commun.}, vol. 55, pp. 1683-1688, Sep. 2007.

\bibitem{40}
B. Zhang, Z. -Q. He, K. Niu, and L. Zhang, ''Robust Linear Beamforming for MIMO Relay Broadcast Channel with Limited Feedback'' \emph{IEEE Signal Process. Lett.}, vol. 17, no. 2, pp. 209-212, February. 2010.



\bibitem{14}
T. Yoo, A. Goldsmith, ``Capacity and power allocation for fading MIMO
channels with channel estimation error," \emph{IEEE Trans. Inf.
Theory.}, vol. 52, no. 5, pp. 2203-2214, May. 2006.

\bibitem{16}
E. Martos, J. F. Paris, U. Fernandez, and A. J. Goldsmith, ``Exact
BER analysis for M-QAM modulation with transmit beamforming under
channel prediction errors," \emph{IEEE Trans. Wireless Commun.},
vol. 7, no. 10, pp. 3674-3678, Oct. 2008.

\bibitem{19}
T. Weber, A. Sklavos, and M. Meurer, ``Imperfect channel-state
information in MIMO transmission," \emph{IEEE Trans. Commun.}, vol.
54, no. 3, pp. 543-552, Mar. 2006.

\bibitem{15}
X. Zhou, P. Sadephi, T. A. Lamahewa, and S. Durrani, ``Optimizing
antenna configuration for MIMO systems with imperfect channel
estimation," \emph{IEEE Trans. Wireless Commun.}, vol. 8, no. 3, pp. 1177-1181,
Mar. 2009.



\bibitem{17}
M. Payaro, A. Pascual, A. I. Perez, and M. A. Lagunas, ``Robust
design of spatial Tomlinson-Harashima precoding in the presence of
errors in the CSI," \emph{IEEE Trans. Wireless Commun.}, vol. 6, no.
7, pp. 2396-2401, Jul. 2007.

\bibitem{18}
R. Mo, J. Lin, Y. H. Chew, and W. H. Chin, ``Relay precoder design
for non-regenerative MIMO relay networks with imperfect channel
state information," \emph{Proc. Int. Conf. on Commun.}, May. 2010.



\bibitem{20}
E. Baccarelli, M. Biagi, and C. Pelizzoni, ``On the information
throughput and optimized power allacation for MIMO wireless systems
with imperfect channel estimation," \emph{IEEE Trans. Signal.
Process.}, vol. 53, no. 7, pp. 2335-2347, Jul. 2005.



%


\bibitem{21}
B. Hassibi, and B. M. Hochwald, ``How much training is needed in
multiple-antenna wireless links?" \emph{IEEE Tran. Inf. Theory.},
vol. 49, no. 4, pp. 951-963, Apr. 2003.
\bibitem{34}
A. Y. Panah and R. W. Health, ``MIMO two-way amplify-and-forward relaying with imperfect receiver CSI,"\emph{IEEE Trans. Vehicular. Technology}, vol. 59, no. 9, pp. 4377-4387, Nov. 2010.
\bibitem{10}
Y. Zhang, H. Luo, and W. Chen, ``Efficient relay beamforming design
with SIC  detection for Dual-Hop MIMO relay networks,"
\emph{IEEE Trans. Vehicular. Technology}, vol. 59, no. 8, pp. 4192-4197, 2010.



\bibitem{13}
J. K. Zhang, A. Kavcic, and K. M. Wong, ``Equal-diagonal QR
decomposition and its application to precoder design for
successive-cancellation detection," \emph{IEEE Trans. Inf. Theory},
vol. 51, no. 1, pp. 154-172, Jan. 2005.

\bibitem{27}
A. D. Dabbagh and D. J. Love, ``Multiple Antenna MMSE Based Downlink Precoding with Quantized Feedback or Channel Mismatch" \emph{IEEE Trans. Commun.}, vol. 56, no. 11, pp. 1859-1868, November. 2008.

\bibitem{22}
C. Wang, E. K. S. Au, R. D. Murch, W. H. Mow, R. S. Cheng, and V.
Lau, ``On the performance of the MIMO Zero-forcing receiver in the
presence of channel estimation error," \emph{IEEE Trans. Wireless
commun.}, vol. 6, no. 3, pp. 805-810, Mar. 2007.




\bibitem{25}
C. Peel, B. Hochwald, and A. Swindlehurst, ``Vector-perturbation
technique for near-capacity multiantenna multiuser
communication-Part I: Channel inversion and regularization,"
\emph{IEEE Trans. Commun.}, vol. 53, no. 1, pp. 195-202, Jan. 2005.



\bibitem{28}
A. Lozano, A. M. Tulino, and S. Verdu, ``Multiple-antenna capacity in the low-power regime," \emph{IEEE Trans. Inf. Theory}, vol. 49, no. 10, pp. 2527-2544, Oct. 2003.


\bibitem{30}
A. M. Tulino and S. Verdu, ``Random matrix theory and wireless communications," \emph{Foundations and Trens in Communications and Information Theory}, vol. 1, no. 1, pp. 1-182, 2004.




\end{thebibliography}
\end{document}